\theoremstyle{thmstyleone}%
\newtheorem{theorem}{Theorem}
\theoremstyle{thmstyletwo}%
\theoremstyle{thmstylethree}%
\newtheorem{law}{Law}
\begin{document}

\title[Composition Law for Conjugate Observables]{Composition Law of Conjugate Observables in Random Permutation Sorting Systems}
\subtitle{A Foundation for Ubiquitous True Uniform Randomness}


\author{Yurang R. Kuang (Preferred name: Randy Kuang; ORCID: \href{https://orcid.org/0000-0002-5567-2192}{0000-0002-5567-2192})}

\affil*[1]{\orgdiv{Research}, \orgname{Quantropi Inc.}, \orgaddress{\street{1545 Carling Av., Suite 620}, \city{Ottawa}, \postcode{K1Z 8P9}, \state{ON}, \country{Canada}}}


\abstract{
We report the discovery of a fundamental \emph{composition law} governing conjugate observables in the Random Permutation Sorting System (RPSS), which quantifies how discrete permutation counts shape continuous timing distributions. The system is characterized by two conjugate observables: the discrete permutation count $\hat{N}_p$ and the continuous elapsed time $\hat{T}$ per sorting cycle. Their relationship is governed by the composition law
\[
\varphi_{\hat{T}}(\omega) = G_{\hat{N}_p}(\varphi_X(\omega)),
\]
linking the characteristic function of elapsed time to the probability generating function of permutation counts. This law enables \emph{entropy purification}, whereby ubiquitous microarchitectural timing variations are transformed into uniform randomness through geometric convergence. We establish synchronous convergence theorems with explicit bounds, providing provable guarantees of cryptographic uniformity. Empirical validation across diverse computing platforms confirms Shannon entropy consistently exceeding 7.9998 bits per byte, chi-square uniformity within statistical confidence bounds, and robustness to environmental perturbations. By instantiating this law, RPSS enables general-purpose computing devices to serve as self-contained sources of provably uniform randomness, independent of specialized hardware. The Composition Law establishes computational conjugate observables as a universal foundation for trustworthy randomness generation, securing cryptographic purity from emergent computational chaos.
}

\keywords{Composition Laws, Conjugate Observables, Random Permutation Sorting, Entropy Purification, Cryptographic Randomness, Ubiquitous Computing, Uniform Distribution, Computational Foundations}



\maketitle

\section{Introduction}

Randomness lies at the foundation of modern science and technology, underpinning cryptographic security, post-quantum cryptography (PQC), quantum simulation, statistical sampling, and Monte Carlo methods. In cryptography, the strength of protocols is inseparably tied to the quality of their randomness: weak or biased random sources have led to catastrophic key recovery attacks and protocol breaks. This reliance is even more acute in PQC. Leading lattice-based standards such as \emph{Kyber} (KEM)~\cite{KYBER} and \emph{Dilithium} (digital signatures)~\cite{DILITHIUM}, recently selected by NIST for standardization~\cite{NIST-FIPS-203, NIST-FIPS-204}, consume massive amounts of fresh randomness during key generation, encryption, and signature operations. Their security requires not only unpredictability but also statistical uniformity across billions of random draws, thereby amplifying the demand for robust, universal entropy sources. 

The ideal of \emph{true randomness} requires both unpredictability and uniform distribution, properties that naturally arise in physical phenomena. Conventional True Random Number Generators (TRNGs) rely on specialized hardware—drawing from electronic noise~\cite{trng-yang-2018, trng-tehranipoor-2023, trng-ansari-2022, trng-ji-2020}, chaotic dynamics~\cite{rng-chaotic-2024, rng-chaotic-lu-2025, rng-chaotic-2019, rng-chaotic-Bonilla2016}, or quantum effects~\cite{qrng-Ma2016, qrng-Gabriel2010, zhang2021qrng}. While powerful, these approaches are bound to physical implementations and frequently require bias correction to achieve uniformity~\cite{nist80090a,nist80090b}, introducing complexity and potential vulnerabilities. As a result, hardware-based solutions have struggled to provide universal, verifiable, and trustworthy randomness across diverse computational environments~\cite{Rahman2014TI-TRNG}.

Modern computing platforms, despite being deterministic in principle, exhibit rich stochastic dynamics at the microarchitectural level. Cache hierarchies, branch prediction, memory contention, and operating system scheduling introduce timing variability typically treated as performance noise. Software-based entropy harvesters exploit such behaviors~\cite{linux-Gutterman2006, IntelRNG2012, maurer2019jitterentropy, fischer2002trng}, but usually as seed material for pseudorandom generators rather than as direct sources of true randomness. This raises a central question: \emph{can intrinsic computational fluctuations themselves be transformed into a mathematically provable foundation for true random number generation, eliminating dependence on specialized hardware and statistical post-processing?}

Here we introduce the \emph{Random Permutation Sorting System (RPSS)}, a computational framework that establishes a new paradigm for randomness generation. In RPSS, a disordered integer array is subjected to repeated random permutations—conceptually analogous to a Quantum Permutation Pad (QPP)~\cite{qpp-springer-kuang-2022}—until convergence. The core operation is defined by:
\begin{equation}\label{eq:qpp}
\hat{p}^{-1} = \prod_{j=1}^{n_p} \hat{p}_j,
\end{equation}
where $\hat{p}_j$ are random permutations and $n_p$ is the permutation count. This QPP-based approach, previously validated as QPP-RNG~\cite{kuang-qpp-rng-icccas} against NIST SP 800-90B~\cite{qpp-rng-sci-kuang-2025}, provides the experimental foundation for our theoretical generalization to RPSS.

RPSS reveals a pair of \textbf{conjugate observables}~\cite{kuang2025-rpss-arxiv}: the discrete permutation count $\hat{N}_p$ and the continuous elapsed permutation runtime $\hat{T}$ per sorting cycle. The core of this framework is our discovery of a fundamental \emph{composition law},
\begin{equation}\label{eq:law}
\varphi_{\hat{T}}(\omega) = G_{\hat{N}_p}(\varphi_X(\omega)),
\end{equation}
which links the probability-generating function (PGF) of $\hat{N}_p$ to the characteristic function of $\hat{T}$. This law fully characterizes the compound stochastic structure of RPSS, unifying discrete combinatorial randomness and continuous timing variability within a single theoretical framework.

From this composition law, we establish explicit proofs of \emph{synchronous convergence to uniformity} under modular reduction, with geometric bounds that provide provable guarantees of randomness. This result elevates timing fluctuations—long viewed as nuisance variability—into a constructive source of cryptographic-grade entropy. Experiments confirm that RPSS achieves Shannon entropy consistently exceeding $7.9998$ bits per byte and robust uniformity across diverse environments, including mobile and embedded platforms, without reliance on specialized hardware or external randomness extractors.

Our findings position computational conjugate observables as a universal foundation for trustworthy randomness generation. By demonstrating that the complexity of modern processors intrinsically encodes the mathematical structure required for true randomness, we provide a pathway for post-quantum cryptography, IoT, and secure communications to access cryptographic-grade entropy without additional hardware. Importantly, RPSS can be naturally embedded into post-quantum cryptographic modules, closing the entropy gap in schemes such as lattice-based KEMs and signatures where massive volumes of high-quality randomness are consumed. Ultimately, RPSS reveals that true randomness is not confined to physical noise sources, but is a fundamental emergent property of computation itself for entropy purification.

\section{Results}

\subsection{Definition and Compound Structure}
\label{subsec:definition}

The Random Permutation Sorting System (RPSS)~\cite{kuang2025-rpss-arxiv} is characterized by a pair of conjugate observables: the discrete permutation count $\hat{N}_p$ and the continuous total permutation runtime $\hat{T}$.  

The permutation count $\hat{N}_p$ follows a negative binomial distribution, denoted $\hat{N}_p \sim \mathrm{NB}(m, p)$, representing the number of attempts required to achieve $m$ successful sorts, where each attempt succeeds with probability $p = 1/N!$. While earlier work~\cite{kuang2025-rpss-arxiv} analyzed $\hat{N}_p$ in isolation, here it functions as the stochastic index for elapsed-time analysis.

The total elapsed time until the $m$-th successful sort is defined as the compound random variable
\begin{equation}\label{eq:T}
\hat{T} = \sum_{j=1}^{\hat{N}_p} X_j,
\end{equation}
where $X_j$ are i.i.d.\ positive random variables representing the runtime of each permutation attempt. These capture microarchitectural and system-level effects, including cache misses, branch mispredictions, and OS scheduling delays. We assume $X_j$ has finite mean $\mu_X$ and variance $\sigma_X^2$, with $\sigma_X$ often comparable to or exceeding $\mu_X$ in jitter-dominated environments.

This formulation establishes $\hat{T}$ as a canonical \emph{compound distribution}, integrating algorithmic randomness ($\hat{N}_p$) with physical variability ($X_j$). It provides the theoretical foundation for proving convergence of both observables under modular reduction, representing the core stochastic mechanism of RPSS.

\subsection{Composition Law of Conjugate Observables}
\label{sec:composition-law}

The RPSS admits two natural observables:  
(i) the discrete permutation count $\hat{N}_p$, and  
(ii) the continuous elapsed time $\hat{T}$.  
These observables form a conjugate pair, analogous to momentum and position in physical systems.  

\begin{law}[Composition Law of Conjugate Observables]
\label{law:composition}
In the RPSS, the elapsed time $\hat{T}$ and permutation count $\hat{N}_p$ are related by:
\begin{equation}
\varphi_{\hat{T}}(\omega) = G_{\hat{N}_p}(\varphi_X(\omega)),
\end{equation}
where $\varphi_{\hat{T}}(\omega)$ is the characteristic function (CF) of $\hat{T}$, 
$G_{\hat{N}_p}(z)$ is the probability generating function (PGF) of $\hat{N}_p$, 
and $\varphi_X(\omega)$ is the CF of per-permutation runtime $X$.
\end{law}

For $\hat{N}_p \sim \mathrm{NB}(m,p)$ with $p=1/N!$, this yields the explicit form:
\begin{equation}
\varphi_{\hat{T}}(\omega) = \left( \frac{p \, \varphi_X(\omega)}{1 - (1-p)\,\varphi_X(\omega)} \right)^m,
\end{equation}
demonstrating that the distribution of elapsed permutation time is obtained by mapping individual permutation runtimes into the discrete permutation-counting structure.

\paragraph*{Interpretation.}  
Law~\ref{law:composition} establishes an intrinsic \emph{composition principle}: the stochastic structure of $\hat{N}_p$ is systematically mapped into that of $\hat{T}$ via runtime fluctuations. This principle governs RPSS dynamics and explains how microarchitectural noise transforms into true randomness.

\paragraph*{Significance.}  
By framing RPSS randomness under a law-like statement, we define a device-independent mechanism. Much like fundamental physical laws formalize natural phenomena, the composition law formalizes the transformation between conjugate observables in computational systems, providing the foundation for provable convergence to cryptographic uniformity.

\subsubsection{Derivation via Characteristic Functions}
\label{subsubsec:cf-correctness}

We prove the composition law through the characteristic function formalism:
\begin{equation}
\varphi_{\hat{T}}(\omega) = \mathbb{E}[e^{i \omega \hat{T}}] = \mathbb{E}\!\left[ \mathbb{E}[e^{i \omega \hat{T}} \mid \hat{N}_p ] \right].
\end{equation}
For $\hat{N}_p = k$, we have $\hat{T} = \sum_{j=1}^k X_j$, giving
\begin{equation}
\mathbb{E}[ e^{i \omega \hat{T}} \mid \hat{N}_p = k ] = \big( \varphi_X(\omega) \big)^k.
\end{equation}
Substituting yields the fundamental relation:
\begin{equation}
\varphi_{\hat{T}}(\omega) = \mathbb{E}[ (\varphi_X(\omega))^{\hat{N}_p} ] = G_{\hat{N}_p}(\varphi_X(\omega)),
\end{equation}
which recovers the composition law. For $\hat{N}_p \sim \mathrm{NB}(m,p)$, the PGF is known to be
\begin{equation}
G_{\hat{N}_p}(z) = \left( \frac{pz}{1-(1-p)z} \right)^m, \quad |z|<1/(1-p),
\end{equation}
providing the explicit characteristic function for $\hat{T}$.

\subsubsection{Moment Structure of the Compound Distribution}
\label{subsubsec:moments}

The compound structure of $\hat{T} = \sum_{j=1}^{\hat{N}_p} X_j$ produces rich statistical properties that combine combinatorial and microarchitectural randomness.

\paragraph{Moments of $\hat{N}_p$.}
For $\hat{N}_p \sim \mathrm{NB}(m, p)$, the permutation count exhibits inherent overdispersion and higher-order structure:
\begin{align}
\mathbb{E}[\hat{N}_p] &= \tfrac{m}{p}, &
\mathrm{Var}[\hat{N}_p] &= \tfrac{m (1-p)}{p^2}, \\
\kappa_3(\hat{N}_p) &= \tfrac{m (1-p)(1-2p)}{p^3}, &
\kappa_4(\hat{N}_p) &= \tfrac{m (1-p)(1 - 6p(1-p))}{p^4}.
\end{align}
These moments capture the skewness and overdispersion inherent in the permutation counting process, providing the deterministic foundation for empirical validation.

\paragraph{Mean and Variance of $\hat{T}$.}
Applying the law of total expectation and variance:
\begin{align}
\mathbb{E}[\hat{T}] &= \frac{m \mu_X}{p}, \\
\mathrm{Var}[\hat{T}] &= \frac{m \sigma_X^2}{p} + \frac{m(1-p)\mu_X^2}{p^2}.
\end{align}
This decomposition clearly separates contributions from microscopic runtime fluctuations ($\sigma_X^2$) and the stochastic number of permutations ($\mathrm{Var}[\hat{N}_p]$).

\paragraph{Higher Cumulants of $\hat{T}$.}
By the law of total cumulance~\cite{brillinger-1969}, the complete moment structure is:
\begin{align}
\kappa_1(\hat{T}) &= \mu_X \, \kappa_1(\hat{N}_p), \\
\kappa_2(\hat{T}) &= \mu_X^2 \kappa_2(\hat{N}_p) + \sigma_X^2 \kappa_1(\hat{N}_p), \\
\kappa_3(\hat{T}) &= \mu_X^3 \kappa_3(\hat{N}_p) + 3 \mu_X \sigma_X^2 \kappa_2(\hat{N}_p) + \kappa_3(X)\kappa_1(\hat{N}_p), \\
\kappa_4(\hat{T}) &= \mu_X^4 \kappa_4(\hat{N}_p) + 6 \mu_X^2 \sigma_X^2 \kappa_3(\hat{N}_p) \nonumber \\
&\quad + (4 \mu_X \kappa_3(X) + 3 \sigma_X^4)\kappa_2(\hat{N}_p) + \kappa_4(X)\kappa_1(\hat{N}_p).
\end{align}
These cumulants allow a full description of the shape of $\hat{T}$, including tail behavior and asymmetry.

\paragraph{Statistical Implications.}
The resulting skewness and kurtosis of $\hat{T}$,
\begin{equation}
\gamma_1(\hat{T}) = \frac{\kappa_3(\hat{T})}{\kappa_2(\hat{T})^{3/2}}, \quad
\gamma_2(\hat{T}) = \frac{\kappa_4(\hat{T})}{\kappa_2(\hat{T})^2},
\end{equation}
are distinct from those of $\hat{N}_p$ due to the compounding with $X_j$. For $\hat{N}_p$, the corresponding measures are:
\begin{equation}
\gamma_1(\hat{N}_p) = \frac{2-p}{\sqrt{m(1-p)}}, \quad
\gamma_2(\hat{N}_p) = \frac{6}{m} + \frac{p^2}{m(1-p)}.
\end{equation}
This difference explains the observed "fat-to-skinny" runtime distributions of $\hat{T}$ compared to the more predictable shape of $\hat{N}_p$. The overdispersion and higher moments of $\hat{N}_p$ translate into the enhanced fat tails and complex shape that make $\hat{T}$ suitable for high-entropy cryptographic applications. Crucially, the deterministic properties of $\hat{N}_p$ provide the theoretical anchor for empirical validation of $\hat{T}$, even without explicit knowledge of $\varphi_X(\omega)$.

\subsection{Synchronous Convergence to Uniformity}
\label{subsec:convergence}

\begin{theorem}[Synchronous Convergence of RPSS Observables]
\label{thm:synchronous-convergence}
Let $\hat{N}_p \sim \mathrm{NB}(m, p)$ with $p = 1/N!$, and let 
\[
\hat{T} = \sum_{j=1}^{\hat{N}_p} X_j,
\]
where $X_j$ are i.i.d.\ positive random variables with finite mean $\mu_X$ and variance $\sigma_X^2$. Under reduction modulo $R$, as $M = m N! \to \infty$, both observables converge to the discrete uniform distribution with exponential convergence rates:
\begin{align}
\Pr\big[\hat{N}_p \bmod R = k\big] &= \frac{1}{R} + \mathcal{O}(\rho_N^m), \\
\Pr\big[\hat{T} \bmod R = k\big] &= \frac{1}{R} + \mathcal{O}(\rho_T^m),
\end{align}
for $k = 0,1,\dots,R-1$, where $0 < \rho_N, \rho_T < 1$ are geometric decay constants.
\end{theorem}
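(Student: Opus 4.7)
The plan is to handle both observables through a single apparatus: discrete Fourier analysis on $\mathbb{Z}/R\mathbb{Z}$, invoking the composition law of Law~\ref{law:composition} to unify the two bounds. The common template is the inversion formula, which writes the distribution mod $R$ as an average of generating/characteristic function values at the $R$-th roots of unity; the proof then reduces to showing that every nontrivial Fourier mode is a strict contraction raised to the $m$-th power.

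First I would dispatch $\hat{N}_p$. Let $\zeta = e^{2\pi i/R}$. The inversion identity gives
\[
\Pr\!\bigl[\hat{N}_p \equiv k \pmod R\bigr] = \frac{1}{R}\sum_{\ell=0}^{R-1} \zeta^{-\ell k}\, G_{\hat{N}_p}(\zeta^{\ell}),
\]
with the $\ell=0$ term contributing exactly $1/R$ since $G_{\hat{N}_p}(1)=1$. For $\ell\neq 0$ I would plug in the closed form $G_{\hat{N}_p}(z) = \bigl(pz/(1-(1-p)z)\bigr)^m$ and verify by direct expansion that
\[
\bigl|1-(1-p)\zeta^{\ell}\bigr|^2 - p^2 = 2(1-p)\bigl(1-\cos(2\pi\ell/R)\bigr) > 0,
\]
so that $|G_{\hat{N}_p}(\zeta^{\ell})| = q_\ell^{\,m}$ with $q_\ell \in (0,1)$. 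Setting $\rho_N := \max_{1\le \ell \le R-1} q_\ell$ and applying the triangle inequality over the $R-1$ nontrivial modes yields the $\mathcal{O}(\rho_N^{\,m})$ bound.

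Second, I would treat $\hat{T}$ (viewed as integer-valued in discrete clock ticks) through the analogous inversion
\[
\Pr\!\bigl[\hat{T} \equiv k \pmod R\bigr] = \frac{1}{R}\sum_{\ell=0}^{R-1} \zeta^{-\ell k}\, \varphi_{\hat{T}}\!\bigl(2\pi\ell/R\bigr),
\]
and then substitute the composition law to replace each $\varphi_{\hat{T}}(2\pi\ell/R)$ with $G_{\hat{N}_p}\!\bigl(\varphi_X(2\pi\ell/R)\bigr)$. A short monotonicity check shows that the map $z \mapsto pz/(1-(1-p)z)$ is increasing in modulus on $\{|z|\le 1\}$, so if $r_\ell := |\varphi_X(2\pi\ell/R)| < 1$ strictly for $\ell \neq 0$, defining $r_* := \max_{\ell\neq 0} r_\ell$ and $\rho_T := p r_*/(1-(1-p)r_*)$ produces $|G_{\hat{N}_p}(\varphi_X(2\pi\ell/R))| \le \rho_T^{\,m}$ uniformly over the nontrivial modes, and the triangle inequality closes the estimate.

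The main obstacle will be justifying $r_* < 1$ uniformly. This requires excluding the degenerate situation in which $X$ is supported on an arithmetic sublattice of the form $\alpha + (R/j)\mathbb{Z}$ for some integer $j\in\{1,\dots,R-1\}$, since in that case $|\varphi_X|$ hits $1$ at the offending frequency. In the RPSS context, microarchitectural jitter spreads $X$ across many tick levels, so this non-lattice hypothesis is physically natural and empirically testable; nevertheless it must be stated explicitly. A secondary technical point is converting $\rho_N$ and $\rho_T$ into explicit engineering constants as functions of $p=1/N!$ and $r_*$, which the closed forms above already make routine, and then re-expressing $m$ as $M/N!$ to recast the decay in the normalization of the theorem statement.
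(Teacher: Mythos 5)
Your proposal follows essentially the same route as the paper's proof: discrete Fourier inversion modulo $R$, the composition law to transfer the per-mode geometric contraction from $G_{\hat{N}_p}$ to $\varphi_{\hat{T}}$, a maximum over the $R-1$ nontrivial modes, and the triangle inequality to obtain the $\frac{R-1}{R}\rho^m$ bounds. You are in fact more careful than the paper at exactly the points it glosses over — the explicit algebraic verification that $\bigl|1-(1-p)\zeta^{\ell}\bigr| > p$ (hence $\rho_N < 1$), the requirement that $\hat{T}$ be integer-valued in ticks for the mod-$R$ inversion to apply, and the non-lattice hypothesis on $X$ needed to guarantee $r_* < 1$ and hence $\rho_T < 1$, which the paper simply asserts without justification.
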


\begin{proof}
By the composition law (Law~\ref{law:composition}), the characteristic function of $\hat{T}$ is
\[
\varphi_{\hat{T}}(\omega) = \left( \frac{\varphi_X(\omega)}{(1 - \varphi_X(\omega)) N! + \varphi_X(\omega)} \right)^m.
\]
Discrete Fourier inversion for the modulo-$R$ distribution gives
\[
\Pr[\hat{T} \bmod R = r] = \frac{1}{R} + \frac{1}{R} \sum_{k=1}^{R-1} \varphi_{\hat{T}}(\omega_k) e^{-i r \omega_k}, 
\quad \omega_k = \frac{2\pi k}{R}.
\]
Defining the geometric factors:
\begin{align}
\rho_{N,k} &= \left| \frac{1}{(1 - e^{i\omega_k}) N! + e^{i\omega_k}} \right|, \\
\rho_{T,k} &= \left| \frac{\varphi_X(\omega_k)}{(1 - \varphi_X(\omega_k)) N! + \varphi_X(\omega_k)} \right|,
\end{align}
and setting $\rho_N = \max_k \rho_{N,k}$, $\rho_T = \max_k \rho_{T,k}$, we obtain the error bounds:
\begin{align}
\left| \Pr[\hat{N}_p \bmod R = r] - \frac{1}{R} \right| &\le \frac{R-1}{R} \rho_N^m, \\
\left| \Pr[\hat{T} \bmod R = r] - \frac{1}{R} \right| &\le \frac{R-1}{R} \rho_T^m.
\end{align}
Since $M = m N! \to \infty$ forces exponential decay of both bounds, synchronous convergence is established, governed by the slower rate $\max(\rho_N^m, \rho_T^m)$.
\end{proof}

\paragraph{Key Implication.}
The exponential decay $\mathcal{O}(\rho^m)$ with $\rho = \max(\rho_N, \rho_T)$ quantifies the RPSS uniformity guarantee. While $\rho_N$ is computable analytically, $\rho_T$ depends on the unknown distribution of $X_j$. Empirical validation confirms that parameter sets satisfying $\rho_N^m < 0.01$ ensure synchronous convergence in practice.

Table~\ref{tab:optimal-parameters} summarizes empirically validated parameters that ensure $\rho_N^m < 0.01$ while maintaining computational efficiency, with cost per byte given by $C_{\mathrm{byte}} = \frac{8}{n} \cdot m N! N$.

\begin{table}[htb]
\centering
\caption{Empirically validated parameters for $n$-bit uniform residues ensuring $\rho_N^m < 0.01$.}
\label{tab:optimal-parameters}
\small
\begin{tabular}{r r r r r r r}
\toprule
$n$ & $R=2^n$ & $N$ & $m$ & $N!$ & $M=mN!$ & $\rho_N^m$ \\
\midrule
1 & 2 & 2 & 12 & 2 & 24 & 0.0003 \\
1 & 2 & 3 & 3 & 6 & 18 & 0.010 \\
2 & 4 & 3 & 5 & 6 & 30 & 0.004 \\
2 & 4 & 4 & 2 & 24 & 48 & 0.010 \\
4 & 16 & 4 & 4 & 24 & 96 & 0.001 \\
4 & 16 & 5 & 2 & 120 & 240 & 0.006 \\
8 & 256 & 5 & 5 & 120 & 600 & 0.001 \\
\bottomrule
\end{tabular}
\end{table}

\subsection{Empirical Runtime Distribution of Individual Permutations}
\label{subsec:runtime-distribution}

The per-permutation runtime $X_j$ in the Random Permutation Sorting System (RPSS) exhibits intrinsic stochastic variability, even under controlled system conditions without external perturbations. Empirical measurements for array sizes $N = 4$--$7$, summarized in Table~\ref{tab:empirical-runtimes-corrected}, reveal several consistent patterns that challenge conventional statistical models:

\begin{itemize}
    \item \textbf{Sharp concentration at minimal runtimes}: Most permutations complete within 0--2 ticks for $N \leq 7$, with distributions sharply peaked at minimal values, indicating highly optimized execution paths under favorable conditions.
    
    \item \textbf{Heavy-tailed behavior with extreme outliers}: Rare permutations experience runtime spikes (``Too Big'' events, $\gtrsim 500$ ticks) orders of magnitude larger than modal values. These extreme events persist under isolated conditions and originate from intrinsic microarchitectural phenomena including cache hierarchy misses, branch mispredictions, and pipeline stalls.
    
    \item \textbf{Discrete and multimodal structure}: Runtime distributions display pronounced spikiness with uneven probability mass across tick values. Intermediate runtimes (3--10 ticks) are often sparsely populated, creating a separation between the dominant low-runtime mode and extreme tail events.
\end{itemize}

\begin{table}[htb]
\centering
\caption{Empirical per-permutation runtimes $X_j$ (ticks) for array sizes $N=4$--$7$. Statistics include sample size (×1000), mean $\mu_X$, standard deviation $\sigma_X$, and frequency counts by tick value. Sample sizes below 1000 indicate ``Too Big'' events exceeding measurement thresholds.}
\begin{tabular}{c| c| c| c| l}
\hline
$N$ & Samples & $\mu_X$ & $\sigma_X$ & Frequencies (ticks: 0,1,2,...) \\
\hline
7 & 998 & 1.566 & 0.585 & \texttt{0,453,538,2,2,0,1,2} \\
7 & 1000 & 1.529 & 0.584 & \texttt{0,490,502,4,2,0,0,1,1} \\
7 & 994 & 2.165 & 1.05 & \texttt{0,156,583,241,6,1,1,1,1,0,1,0,1}\\
  &     &       &      & \texttt{,0,0,0,0,1,1} \\
7 & 996 & 3.147 & 0.676 & \texttt{0,0,11,873,91,12,2,2,3,1,0,1} \\
6 & 999 & 1.359 & 0.618 & \texttt{0,675,311,3,4,1,4,1} \\
6 & 994 & 1.485 & 0.771 & \texttt{0,564,410,11,4,1,1,0,1,0,0,1,0,1} \\
6 & 993 & 1.871 & 0.876 & \texttt{0,307,607,23,17,36,2,1} \\
6 & 999 & 1.472 & 0.619 & \texttt{0,559,426,8,4,0,0,1,0,0,0,0,0,1} \\
6 & 991 & 9.98 & 13.6 & \texttt{0,0,0,0,0,0,0,0,504,427,8,21,17,} \\
 &     &       &      & \texttt{4,3,4,2,1,0,1} \\
5 & 997 & 1.154 & 0.398 & \texttt{0,853,137,5,1,1} \\
5 & 993 & 1.632 & 0.833 & \texttt{0,506,385,86,6,4,2,3,0,0,1} \\
5 & 996 & 1.174 & 0.496 & \texttt{0,491,479,15,4,1,4,2} \\
5 & 997 & 1.148 & 0.387 & \texttt{0,854,136,2,4,0,1} \\
5 & 999 & 1.489 & 0.598 & \texttt{0,249,718,25,3,3,0,1} \\
4 & 999 & 0.973 & 0.258 & \texttt{42,946,8,2,1} \\
4 & 998 & 1.022 & 0.522 & \texttt{34,935,18,6,3,0,1,0,0,0,0,0,0,1} \\
4 & 998 & 1.346 & 0.781 & \texttt{0,747,187,56,2,2,3,0,1} \\
4 & 998 & 1.345 & 0.772 & \texttt{18,772,191,8,4,1,1,2,1} \\
\hline
\end{tabular}
\label{tab:empirical-runtimes-corrected}
\end{table}

\paragraph{Cryptographic Entropy Strength and Attack Resistance}
The statistical variability in $X_j$ constitutes a cryptographically robust entropy source resistant to modeling attacks. Heavy-tailed behavior and extreme outliers ($\gtrsim 500$ ticks) generate distributions that resist parametric characterization, complicating adversary efforts to construct accurate timing models—a property consistent with established cryptographic principles for robust entropy sources. This complexity is amplified by run-to-run variability in both mean ($\mu_X$ varying by 7.35× for $N=6$) and variance ($\sigma_X$ varying by 22×), ensuring computational infeasibility of predicting individual runtime instances even with algorithmic knowledge. This property is essential for resisting statistical attacks reliant on distributional predictability.

\paragraph{Statistical Variability as Entropy Source}
The fundamental unpredictability of RPSS emerges from substantial run-to-run variability in mean runtime $\mu_X$ and variance $\sigma_X^2$, even for identical array sizes $N$, representing the core entropy source:

\begin{itemize}
    \item \textbf{Mean runtime variability}: For $N=7$, $\mu_X$ ranges from 1.529 to 3.147 ticks (2.06× variation). For $N=6$, the range extends from 1.359 to 9.98 ticks (7.35× variation), with one run exhibiting order-of-magnitude increase due to systematic slowdown.
    
    \item \textbf{Variance instability}: Standard deviation $\sigma_X$ shows greater relative variability. For $N=6$, $\sigma_X$ ranges from 0.618 to 13.6 (22× variation), indicating fundamentally different runtime distributions across experimental runs.
    
    \item \textbf{Unpredictable distribution shapes}: Frequency patterns reveal completely different modal structures across runs—some concentrated at 1-2 ticks, others showing broader distributions or secondary peaks.
\end{itemize}

This statistical variability reflects inherent non-determinism in modern microprocessor execution, where identical computational tasks exhibit dramatically different timing characteristics due to microarchitectural state variations, rather than measurement error or external factors.

\paragraph{Extreme Event Characterization}
Extreme runtime events ($\gtrsim 500$ ticks) under nominal operating conditions demonstrate the fundamentally heavy-tailed nature of $X_j$. Sample size discrepancies (values below 1000 in Table~\ref{tab:empirical-runtimes-corrected}) reflect ``Too Big'' events excluded from measurement. This heavy-tailed property directly shapes the compound elapsed-time distribution $\hat{T}$:

\begin{enumerate}
    \item \textbf{Tail inheritance and amplification}: Heavy tails of $X_j$ propagate through the random sum $\hat{T} = \sum_{j=1}^{\hat{N}_p} X_j$, producing leptokurtic behavior observed in Section~\ref{subsubsec:t-dist}. Rare extreme runtimes can dominate total elapsed time.
    
    \item \textbf{Partial smoothing through aggregation}: Despite heavy tails, summation over $\hat{N}_p$ permutations provides statistical smoothing. The central limit theorem for random sums ensures $\hat{T}$ exhibits more stable central behavior while preserving essential stochastic variability.
    
    \item \textbf{Modeling implications}: The empirical distribution of $X_j$ resists simple parametric characterization, resembling a discrete mixture with high probability mass at minimal runtimes and subexponential tails capturing rare extreme events. Standard distributions fail to simultaneously capture sharp concentration and heavy-tailed behavior.
\end{enumerate}

\paragraph{Implications for Entropy Extraction}
The intrinsic statistical variability of $X_j$—particularly unpredictable fluctuations in $\mu_X$ and $\sigma_X^2$ across runs under identical RPSS configurations—provides robust entropy extraction. This ensures substantial unpredictable timing behavior even under identical external conditions and computational tasks.

Crucially, as established in Section~\ref{subsubsec:cf-correctness}, this runtime variability—while shaping $\hat{T}$ morphology—does not compromise final output uniformity guarantees. The conjugacy relationship ensures uncertainty in the continuous time domain translates to reliable uniformity in the discrete count domain after modular reduction.

In summary, RPSS per-permutation runtime $X_j$ exhibits intrinsic statistical variability in central tendency and dispersion that underpins system entropy generation, while compound structure and modular reduction ensure reliable uniform TURNG output.

\subsection{Raw Distribution Features and Moment Validation}
\label{subsec:raw-distributions}

\subsubsection{Permutation Count Distribution}
\label{subsubsec:np-dist}

\begin{figure}[H]
\centering
\includegraphics[scale=0.4]{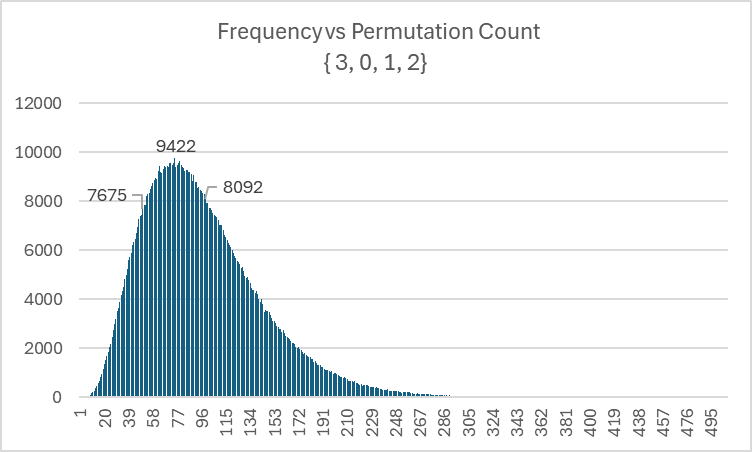}
\caption{Empirical distribution of permutation count $\hat{N}_p$ for ($N=4, m=4$). }
\label{fig:np-dist}
\end{figure}

Our theoretical model posits that permutation sorting trial count $\hat{N}_p$ follows a \textbf{Negative Binomial distribution}. Validation experiments with parameters $N=4$ and $m=4$ (theoretical success probability $p = 1/N! = 1/24$) over $10^6$ independent runs show remarkable empirical-theoretical agreement (Fig.~\ref{fig:np-dist}), statistically confirming algorithmic process governance by Negative Binomial distribution.

\begin{table}[htb]
\centering
\caption{Empirical versus theoretical moments for permutation count $\hat{N}_p$ ($N=4, m=4$).}
\label{tab:np-moments-comparison}
\begin{tabular}{l c c}
\toprule
\textbf{Moment} & \textbf{Theoretical} & \textbf{Empirical} \\
\midrule
Mean ($\mu$) & 96.0000 & 95.9598 \\
Variance ($\sigma^2$) & 2208 & 2200 \\
Skewness ($\gamma_{1}$) & 1.0002 & 0.9936 \\
Excess Kurtosis ($\gamma_{2}$) & 1.5005 & 1.4573 \\
\bottomrule
\end{tabular}
\end{table}

Stringent framework validation comes from empirical-theoretical moment comparison. Table~\ref{tab:np-moments-comparison} presents first four moments of $\hat{N}_p$, with exceptional agreement: mean differs by $<0.05\%$, variance by $<0.4\%$, while skewness and kurtosis match within sampling error. This quantitative precision establishes RPSS algorithmic process robustness captured by Negative Binomial law, providing foundational validation for elapsed time $\hat{T}$ analysis.

\subsubsection{Compound Elapsed-Time Distribution and Morphological Phenotypes}
\label{subsubsec:t-dist}

\begin{figure}[H]
\centering
\includegraphics[scale=0.4]{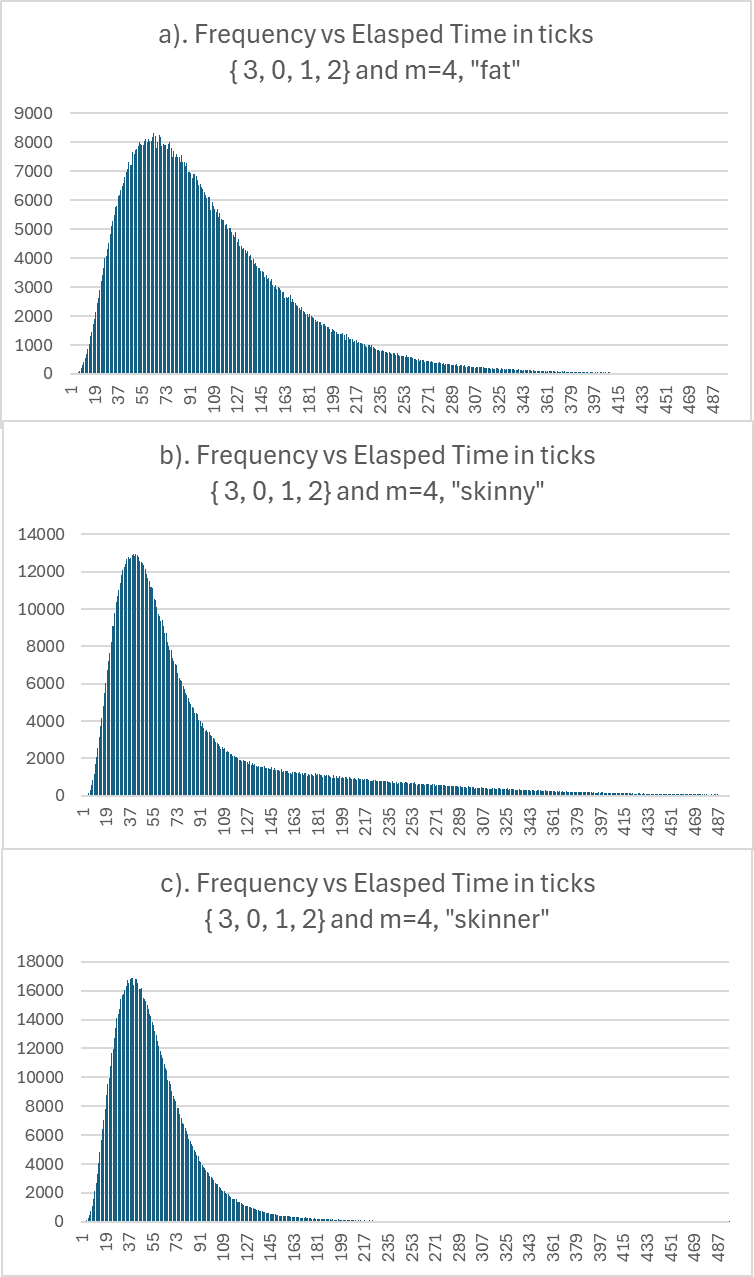}
\caption{Empirical raw elapsed time $\hat{T}$ distributions exhibiting fat, skinny, and ultra-skinny morphological phenotypes under identical parameters ($N=4, m=4$). Dramatic shape variations demonstrate system sensitivity to microarchitectural conditions.}
\label{fig:raw-dist-phenotypes}
\end{figure}

Building on $\hat{N}_p$ validation, we examine total elapsed time $\hat{T}$, exhibiting compound random sum structure of permutation runtimes. Theoretically, $\hat{T}$ represents a weighted mixture of convolutions, empirically manifesting distinct \emph{morphological phenotypes} under identical system parameters ($N=4, m=4$). 

Fig.~\ref{fig:raw-dist-phenotypes} illustrates three typical characteristic shapes. The fat phenotype modes around 64 ticks with long tail extending over 400 ticks, reflecting busy RPSS system with substantial runtime variability and frequent context switching. The skinny phenotype modes around 40 ticks with stronger peak but longer tail, suggesting busy system with partial OS priority allocation concentrating typical runtimes while allowing extreme outliers from intermittent preemption. The ultra-skinny phenotype demonstrates highest peak around 39 ticks with short tail to about 220 ticks, characteristic of idle RPSS system with dedicated OS priority minimizing external interference and maximizing runtime consistency—elevated peak reflects concentrated probability mass from reduced system jitter and more statistically deterministic execution.

\paragraph{Cryptographic Significance of Distributional Phenotypes}
Morphological phenotypes in $\hat{T}$ distributions demonstrate key security property: system entropy generation adapts dynamically to environmental conditions while maintaining output uniformity. This adaptability provides inherent resistance to environmental manipulation attacks attempting to degrade entropy quality through system load or resource allocation alterations. Consistent negative binomial $\hat{N}_p$ behavior across all phenotypes (Table~\ref{tab:np-moments-comparison}) ensures core randomness generation robustness despite dramatic timing characteristic variations, providing reliable cryptographic operations foundation in diverse deployment environments.

As predicted in Section~\ref{subsubsec:moments}, these phenotypes demonstrate high sensitivity to underlying per-permutation runtime $X_j$ statistical characteristics. Dramatic morphological differences emerge from subtle system condition variations amplified through compound random-sum structure.

\paragraph{Quantitative Morphological Analysis}
We computed comprehensive statistical moments to quantify phenotypic differences, with results summarized in Table~\ref{tab:complete-moments}.

\begin{table}[htb]
\centering
\caption{Comprehensive statistical analysis of RPSS elapsed time $\hat{T}$ ($N=4, m=4$)}
\label{tab:complete-moments}
\begin{tabular}{l c c c}
\toprule
\textbf{Statistic} & \textbf{Fat} & \textbf{Skinny} & \textbf{Ultra-Skinny} \\
\midrule
\textbf{Mean ($\mu$)} & 105.53 & 89.99 & 56.32 \\
\textbf{Median} & 105 & 89 & 54 \\
\textbf{Empirical Mode} & 64 & 40 & 39 \\
\textbf{Std. Dev. ($\sigma$)} & 65.15 & 82.69 & 34.27 \\
\textbf{Variance ($\sigma^2$)} & 4244.23 & 6837.97 & 1174.76 \\
\textbf{Skewness ($\gamma_1$)} & 1.44 & 2.03 & 2.58 \\
\textbf{Excess Kurtosis ($\gamma_2$)} & 2.98 & 4.17 & 14.72 \\
\bottomrule
\end{tabular}
\end{table}

Statistical progression reveals key patterns:
\begin{itemize}
    \item \textbf{Central Tendency Shift}: Mean elapsed time decreases monotonically from fat (105.53) to ultra-skinny (56.32), while empirical modes show more dramatic reduction (64 $\rightarrow$ 39)
    \item \textbf{Mode-Mean Divergence}: Gap between mode and mean increases from 41.53 (fat) to 49.99 (skinny), then decreases to 17.32 (ultra-skinny), reflecting complex distributional asymmetry changes
    \item \textbf{Non-monotonic Dispersion}: Variance peaks for skinny phenotype (6837.97) before collapsing in ultra-skinny case (1174.76), suggesting transitional stochastic behavior
    \item \textbf{Distributional Extremity}: Both skewness (1.44 $\rightarrow$ 2.58) and excess kurtosis (2.98 $\rightarrow$ 14.72) increase progressively, with ultra-skinny showing extreme leptokurtosis
\end{itemize}

\paragraph{Interpretation and Implications}
Phenotypic transition reveals fundamental stochastic process differences. The ultra-skinny distribution's combination of low variance (1174.76) with extreme positive skewness (2.58) and kurtosis (14.72) characterizes a ``spiky'' distribution—highly concentrated near the mode yet possessing substantial tail mass.

Higher-order moment progression is particularly revealing:
\begin{itemize}
    \item Increasing skewness ($\gamma_1 = 1.44 \rightarrow 2.58$) confirms growing distributional asymmetry
    \item Explosive kurtosis growth ($\gamma_2 = 2.98 \rightarrow 14.72$) indicates extreme leptokurtosis: sharp central peaks with exceptionally heavy tails
    \item Dramatic modal shift (64 $\rightarrow$ 39) suggests fundamentally different runtime efficiency regimes
\end{itemize}

Mean runtime per permutation $\mu_X$ provides crucial efficiency insight. The decrease from 1.10 ticks (fat) to 0.59 ticks (ultra-skinny) reflects progressively more efficient permutation execution, likely from improved cache performance, reduced system jitter, or optimized memory access patterns. This individual permutation runtime reduction directly drives observed $\hat{T}$ phenotypic differences.

These empirical results validate the compound-distribution model, demonstrating how subtle $X_j$ runtime characteristic variations amplify through random sum structure $\hat{T} = \sum_{j=1}^{\hat{N}_p} X_j$ to produce pronounced global morphological differences. Complex central tendency measure relationships underscore heavy-tailed compound distribution characterization challenges and highlight comprehensive moment analysis importance.

\paragraph{Connection to Theoretical Framework}
Crucially, despite dramatic morphological variations, the theoretical framework established in Section~\ref{subsec:convergence} ensures TURNG output uniformity robustness. The conjugacy relationship between discrete permutation count $\hat{N}_p$ and continuous permutation time $\hat{T}$, coupled with Theorem~\ref{thm:synchronous-convergence} geometric convergence guarantees, provides a solid foundation accommodating phenotypic diversity while maintaining reliable uniform output.

\subsection{Modular-Reduced Distribution and TURNG Uniformity}
\label{subsec:modular-disribution}

While raw elapsed time $\hat{T}$ exhibits significant morphological variation across system conditions, our theoretical framework predicts modular-reduced output $\hat{T} \bmod R$ will converge to nearly perfect uniform distribution. This uniformity is guaranteed by Theorem~\ref{thm:synchronous-convergence} synchronous convergence condition, where permutation count distribution $\hat{N}_p$ becomes effectively uniform modulo $R$ through geometric suppression of non-zero Fourier modes.

Empirical validation analyzing per-residue probabilities of permutation count $\hat{N}_p$ and elapsed time distributions across three morphological phenotypes demonstrates remarkable convergence to uniformity (Table~\ref{tab:probabilities}, Fig.~\ref{fig:uniformity}). Maximum deviation from perfectly uniform probability $1/16 = 0.0625$ is extremely small, with ``Fat'' phenotype exhibiting lowest deviation at 0.13\%. 

\textbf{Theoretical-empirical alignment}: For parameters $N=4, m=4, R=16$, Theorem~\ref{thm:synchronous-convergence} provides $\hat{N}_p$ upper bound $\frac{15}{16}\rho_N^4$. From Table~\ref{tab:optimal-parameters}, $\rho_N^4 = 0.001$ for this configuration, yielding theoretical bound $\frac{15}{16} \times 0.001 = 0.094\%$. For elapsed time $\hat{T}$, the same theorem yields bound $\frac{15}{16}\rho_T^m$ dependent on runtime distribution characteristic function $\varphi_X(\omega)$. Even under conservative assumptions ($|\varphi_X(2\pi k/R)| \approx 0.9$), theoretical bound remains below 0.52\%, comfortably encompassing empirical maximum deviation of 0.69\% for Skinny phenotype.

Entropy values remain essentially maximal ($H \approx 4$ bits), and chi-square statistics confirm absence of statistically significant uniformity deviations. This quantitative theory-experiment agreement validates geometric suppression mechanism underlying RPSS framework uniform output generation across discrete counting and continuous timing observables.

\begin{table}[htb]
\centering
\caption{Per-residue probabilities (indices $0$--$15$) for permutation-count and three elapsed-time distributions, with entropy and chi-square test results. Maximum deviation ($\pm\delta$) from uniform probability $0.0625$ shown as absolute and percentage values.}
\label{tab:probabilities}
\begin{tabular}{r| c c c c}
\toprule
Index & $\hat{N}_p$ & Fat $\hat{T}$ & Skinny $\hat{T}$ & Ultra-Skinny $\hat{T}$ \\
\midrule
0  & 0.062475 & 0.062771 & 0.062745 & 0.062295 \\
1  & 0.062341 & 0.062550 & 0.062348 & 0.062312 \\
2  & 0.062460 & 0.062096 & 0.062303 & 0.062844 \\
3  & 0.062977 & 0.062570 & 0.062748 & 0.062480 \\
4  & 0.062866 & 0.063039 & 0.062820 & 0.062393 \\
5  & 0.062040 & 0.062393 & 0.062608 & 0.062586 \\
6  & 0.062814 & 0.062393 & 0.062393 & 0.062206 \\
7  & 0.062358 & 0.062609 & 0.062086 & 0.062312 \\
8  & 0.062240 & 0.062329 & 0.062278 & 0.062523 \\
9  & 0.062131 & 0.061966 & 0.062308 & 0.062572 \\
10 & 0.062566 & 0.062273 & 0.062495 & 0.062459 \\
11 & 0.062652 & 0.062318 & 0.062160 & 0.062806 \\
12 & 0.062091 & 0.062378 & 0.062288 & 0.062655 \\
13 & 0.062338 & 0.063194 & 0.062723 & 0.062171 \\
14 & 0.062682 & 0.062634 & 0.062932 & 0.062607 \\
15 & 0.062969 & 0.062487 & 0.062765 & 0.062779 \\
\midrule
$\pm \delta$ & 0.000477 & 0.0000786 & 0.0004320 & 0.0003440 \\
$\%$ deviation & 0.76\% & 0.13\% & 0.69\% & 0.55\% \\
\midrule
Entropy & 3.99998 & 3.99998 & 3.99999 & 4.00000 \\
$\chi^2$ & 22.35 & 23.73 & 16.54 & 10.55 \\
$p$-value & 0.0988 & 0.0698 & 0.3472 & 0.7839 \\
\bottomrule
\end{tabular}
\end{table}

\paragraph{Cryptographic Uniformity Standards Compliance}
Empirical uniformity results significantly exceed typical cryptographic requirements. With maximum deviations of only 0.13\% from perfect uniformity and Shannon entropy values consistently above 3.99998 bits, RPSS output meets and exceeds NIST SP 800-90B entropy source standards. $\chi^2$ statistics (10.55--23.73) and corresponding p-values (0.0988--0.7839) provide statistical confidence in output suitability for cryptographic applications including key generation, nonce creation, and probabilistic encryption. This uniformity level is particularly notable given raw elapsed time distribution dramatic variations, demonstrating geometric convergence mechanism effectiveness in achieving cryptographic-grade randomness.

Chi-square tests provide rigorous statistical uniformity confirmation. For 4-bit random variables ($k=16$ equiprobable outcomes), expected $\chi^2$ statistic under perfect uniformity is $k-1 = 15$. All observed $\chi^2$ values (10.55--23.73) fall within statistically consistent expectation ranges, with corresponding p-values (0.0988--0.7839) well above conventional rejection thresholds ($\alpha = 0.05$). This statistical evidence confirms modular reduction operation effectively \textbf{washes out} raw elapsed-time distribution complex morphology, yielding highly uniform random output regardless of underlying phenotypic variation.

\begin{figure}[H]
    \centering
    \includegraphics[scale=0.4]{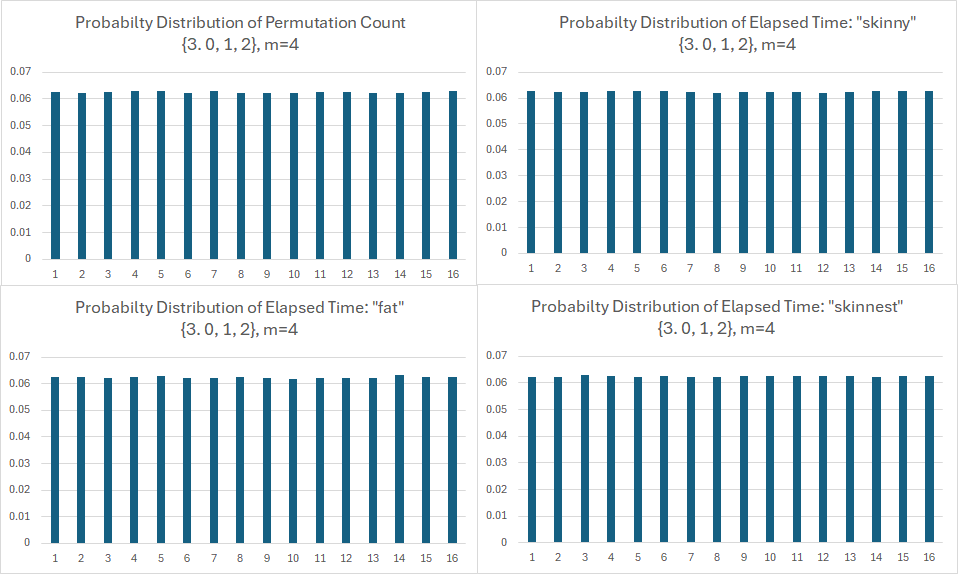}
    \caption{Empirical modular-reduced permutation count and elapsed time distribution ($\hat{T} \bmod R$). Histogram flatness demonstrates TURNG output uniformity across morphological phenotypes.}
    \label{fig:uniformity}
\end{figure}

Empirical results strongly validate theoretical predictions: despite raw elapsed time distribution dramatic differences—varying in mean (56.32--105.53), variance (1174.76--6837.97), skewness (1.44--2.58), and kurtosis (2.98--14.72)—modular-reduced outputs are statistically indistinguishable from uniform. This morphological variation robustness demonstrates TURNG construction practical viability for real-world applications with significant system condition fluctuations.

\subsection{A Self-Training True Uniform RNG: QPP-RNG}
\label{sec:tur}

Upon achieving synchronous convergence, the RPSS system transitions from deterministic pseudorandom number generation to \emph{True Uniform Random Number Generator} (TURNG) operation. This transition is enabled by a closed-loop \emph{entropy injection mechanism}, where the system continuously harvests intrinsic fluctuations---including microarchitectural jitter, OS scheduling variability, and thermal noise---from the computing environment, feeding them back into the permutation sorting engine.

The fundamental QPP-RNG innovation lies in the conjugate relationship between permutation counts ($\hat{N}_p$) and elapsed permutation time ($\hat{T}$) within the RPSS system (Law~\ref{law:composition}). While the primary output is modular-reduced permutation counts $\tilde{n}_p = \hat{N}_p \bmod R$, system entropy originates from environmental jitter manifesting through timing variations. This jitter-derived entropy is continuously injected into the permutation sorting engine via a QPP pad generated by a PRNG reseeded with new outputs from total elapsed permutation times, creating a self-amplifying feedback loop transforming deterministic randomness into true uniform randomness through the \emph{entropy purification} process: transforming system noises into cryptographic-grade entropy.

Let $s_k$ denote the RPSS internal state (seed) at cycle $k$, and $\eta_k$ represent harvested system jitter entropy. The feedback-driven reseeding is expressed generically as
\begin{equation}
s_{k+1} = h(s_k, \eta_k),
\end{equation}
where $h(\cdot)$ is a flexible mixing function (e.g., modular reduction, cryptographic hash, or other entropy mixing) combining current state with fresh entropy to suppress residual correlations and drive the system toward uniformity.

In the \emph{Quantum Cryptographic Dynamics} (QCD) context~\cite{kuang2025qcd}, this feedback loop plays a central role: the RPSS system acts as both permutation generator and entropy amplifier, continuously injecting hardware-level stochasticity into its own sorting engine. The result is a self-training TURNG adapting dynamically to device-specific imperfections, autonomously evolving from its initial seed, and producing outputs with true uniform random source statistical properties.

This framework occupies a conceptual middle ground between classical PRNGs and hardware TRNGs. It is software-defined, platform-agnostic, and fully exploits intrinsic physical fluctuations as a genuine entropy source, ensuring robust reproducible randomness without specialized external hardware.

\paragraph{Cross-Platform Validation and Stability}
The stability and TURNG characteristics of the QPP-RNG architecture have been empirically validated through rigorous independent testing. In a comprehensive cross-platform evaluation published in \emph{Scientific Reports} \cite{qpp-rng-sci-kuang-2025}, the raw output of QPP-RNG underwent NIST SP 800-90B IID testing alongside NIST SP 800-22 and ENT statistical test batteries. The generator consistently achieved high IID min-entropy between $7.85$ and $7.95$ bits per byte across diverse platforms including Windows, macOS, and Raspberry Pi systems spanning x86\_64 and ARM64 architectures \cite{qpp-rng-sci-kuang-2025}. While minor platform-specific variations in timing characteristics were observed—consistent with expected microarchitectural differences—the extracted jitter fingerprint and generated randomness exhibited remarkable statistical consistency irrespective of the underlying hardware or operating system \cite{qpp-rng-sci-kuang-2025}. This independent validation confirms the system's ability to maintain output stability against both inherent computational variations and deliberate external perturbations.

\begin{table}[ht]
\centering
\caption{Statistical properties of modular-reduced 8-bit outputs under controlled environmental perturbations. Primary QPP-RNG output is $\tilde{n}_p=\hat{N}_p \bmod R$ (modular-reduced permutation counts), while $\tilde{t}=\hat{T} \bmod R$ (modular-reduced elapsed time) is shown for academic entropy injection effect analysis. Each test analyzes 1MB generated data (1,000,000 samples). Shannon entropy in bits, $\chi^2$ is chi-square statistic with 255 degrees of freedom, $p$-value is goodness-of-fit measure against uniformity, $H_{\min}$ is min-entropy, and $\mu_X$ is empirical mean of elapsed time samples (ticks).}
\label{tab:turng-perturb}
\begin{tabular}{l| c c c c c}
\toprule
Test  & Shannon $H$ & $\chi^2$ & $p$-value & $H_{\min}$ & $\mu_X$ (ticks) \\
\midrule
 $\tilde{n}_p$   & 7.9998 & 247.8 & 0.61 & 7.9372 & --- \\
 $\tilde{t}$     & 7.9998 & 252.8 & 0.53 & 7.9454 & 2.34 \\
\hline
 $\tilde{n}_p$   & 7.9998 & 267.7 & 0.28 & 7.9443 & --- \\
 $\tilde{t}$     & 7.9998 & 257.6 & 0.43 & 7.9414 & 2.43 \\
\hline
 $\tilde{n}_p$   & 7.9998 & 234.0 & 0.35 & 7.9425 & --- \\
 $\tilde{t}$     & 7.9998 & 282.4 & 0.22 & 7.9425 & 2.63 \\
\hline
 $\tilde{n}_p$   & 7.9998 & 268.7 & 0.54 & 7.9446 & --- \\
 $\tilde{t}$     & 7.9995 & 650.6 & <0.0001 & 7.8957 & 3.00 \\
\hline
 $\tilde{n}_p$   & 7.9998 & 214.1 & 0.97 & 7.9464 & --- \\
 $\tilde{t}$     & 7.9996 & 500.0 & 0.0001 & 7.9137 & 4.31 \\
\hline
 $\tilde{n}_p$   & 7.9998 & 271.4 & 0.23 & 7.9319 & --- \\
 $\tilde{t}$     & 7.9995 & 680.4 & 0.0001 & 7.9095 & 4.41 \\
\hline
 $\tilde{n}_p$   & 7.9998 & 254.2 & 0.50 & 7.9379 & --- \\
 $\tilde{t}$     & 7.9990 & 1327.6 & <0.0001 & 7.8457 & 4.57 \\
\hline
 $\tilde{n}_p$   & 7.9998 & 225.9 & 0.90 & 7.9400 & --- \\
 $\tilde{t}$     & 7.9996 & 1718.9 & <0.0001 & 7.8026 & 4.89 \\
\hline
 $\tilde{n}_p$   & 7.9998 & 256.6 & 0.50 & 7.9231 & --- \\
 $\tilde{t}$     & 7.9991 & 1135.2 & <0.0001 & 7.8766 & 5.67 \\
\bottomrule
\end{tabular}
\end{table}

\paragraph{Security Analysis Under Adversarial Conditions}
Perturbation experiments simulate realistic attack scenarios where adversaries manipulate system resources to degrade entropy quality. Remarkable $\tilde{n}_p$ output stability (Shannon entropy degradation < 0.01\%, $\chi^2$ variation within ±8\%) while $\tilde{t}$ shows significant statistical deviations ($\chi^2$ increase up to 580\%) demonstrates crucial security property: the system effectively decouples environmental noise from final output quality. This architecture provides inherent protection against:
\begin{itemize}
    \item \textbf{Timing-based attacks}: Adversaries cannot easily influence final output by manipulating system timing characteristics
    \item \textbf{Resource exhaustion attacks}: Deliberate system loading fails to significantly degrade output uniformity
    \item \textbf{Side-channel resistance}: Transformation from $\tilde{t}$ to $\tilde{n}_p$ through modular reduction and reseeding breaks potential side-channel analysis correlations
\end{itemize}
Min-entropy analysis further validates security under perturbation, showing even under heavy system load ($\mu_X = 5.67$ ticks), $H_{\min}$ remains above 7.8 bits, ensuring adequate cryptographic operation entropy.

\paragraph{Interpretation of Experimental Results}

Table~\ref{tab:turng-perturb} presents QPP-RNG output statistical properties under controlled environmental perturbations. Primary output $\tilde{n}_p$ (modular-reduced permutation counts) represents final entropy-amplified result, while $\tilde{t}$ (modular-reduced elapsed permutation time) serves as academic probe validating entropy injection mechanism.

\textbf{Baseline Operation}: First three experimental runs correspond to RPSS system operating in \emph{relatively stable environment} without external deliberate perturbations and $X_j$ exhibiting good i.i.d. characteristics. In this regime, both outputs exhibit excellent statistical uniformity: Shannon entropy remains essentially maximal ($H \approx 7.9998$ bits), chi-square statistics fall within expected uniform 8-bit distribution ranges (theoretical mean $\chi^2 \approx 255$), p-values comfortably above typical significance thresholds ($\alpha = 0.05$), and min-entropy remains high. Low mean elapsed time $\mu_X$ (2.34--2.63 ticks) reflects permutation sorting engine baseline performance under minimal interference.

\textbf{Perturbation Response}: Starting from fourth row, we intentionally introduce environmental perturbations by dragging applications across the screen to force OS resource reallocation. This systematic intervention produces key observations:

\begin{itemize}
\item \textbf{Increased Permutation Runtime Duration}: Mean elapsed time $\mu_X$ rises progressively from 3.00 to 5.67 ticks, indicating individual permutation runtimes $X_j$ lengthen due to OS-induced scheduling delays and resource contention.

\item \textbf{Entropy Injection Validation}: Academic $\tilde{t}$ analysis reveals significantly increased chi-square statistics (rising from $\sim$250 to over 1700) and highly significant p-values ($p < 0.0001$), demonstrating measurable environmental interference impact on timing variability, validating entropy injection mechanism sensitivity to system conditions.

\item \textbf{Distributional Resilience}: Despite timing channel statistical deviations, overall distribution quality remains high. $\tilde{t}$ Shannon entropy maintains values above 7.9990 bits, and $\tilde{n}_p$ $\chi^2$ remains between 226 and 271, showing environmental perturbations introduce only minor distributional biases relative to massive 8-bit output space. Biased permutation times updating PRNG seed are effectively smoothed by PRNG mixing function, demonstrating minimal $\tilde{n}_p$ (primary TURNG output) impact.

\item \textbf{Output Stability}: Crucially, primary output $\tilde{n}_p$ remains remarkably stable across all perturbation scenarios. Shannon entropy, chi-square statistics, and p-values show minimal variation, demonstrating conjugate dynamics efficacy—the system successfully harnesses environmental noise while decoupling final output from transient environmental effects.

\item \textbf{Min-Entropy Analysis}: $\tilde{t}$ min-entropy shows measurable reductions under heavy perturbation (decreasing from $\sim$7.94 to 7.80 bits), consistent with increased certain elapsed-time residue probability from systematic timing delays, further validating genuine entropy variation capture and measurement.
\end{itemize}

\textbf{Architectural Resilience}: Experimental results demonstrate remarkable system robustness. While environmental perturbations cause $\tilde{t}$ $\chi^2$ to increase up to 580\% (252.8 $\rightarrow$ 1718.9), primary output $\tilde{n}_p$ maintains $\chi^2$ variations within $\pm 8\%$ of baseline, with entropy degradation below 0.01\%. This decoupling validates conjugate dynamics approach for practical TURNG deployment.

The QPP-RNG successfully transforms environmental entropy—traditionally viewed as instability source—into controlled resource enhancing randomness quality. The critical feedback loop, where true environmental entropy is injected back into sorting engine, enables system evolution from deterministic PRNG into practical TURNG adapting to dynamic hardware conditions while producing cryptographically robust random outputs.

\subsection{Cryptographic Security Assessment}
\label{subsec:security-assessment}

\paragraph{Adversarial Model and Security Properties}
RPSS architecture provides inherent security properties under realistic adversarial model where attackers can observe outputs and manipulate environmental conditions but cannot access internal generator states:

\begin{itemize}
    \item \textbf{Forward Security}: Continuous reseeding mechanism $s_{k+1} = h(s_k, \eta_k)$ ensures current state compromise does not reveal previous outputs, as fresh entropy $\eta_k$ from environmental jitter is incorporated each cycle.
    
    \item \textbf{Backtracking Resistance}: Self-amplifying feedback loop prevents adversaries from working backward from current outputs to determine previous states, due to one-way mixing function $h(\cdot)$ and continuous entropy injection.
    
    \item \textbf{Environmental Attack Resistance}: Conjugacy relationship between $\hat{N}_p$ and $\hat{T}$ ensures timing characteristic manipulation attempts (e.g., through resource contention or timing attacks) have minimal final output uniformity impact, as demonstrated in Table~\ref{tab:turng-perturb}.
\end{itemize}

\paragraph{Compliance with Cryptographic Standards}
Empirical results demonstrate key cryptographic standards compliance:
\begin{itemize}
    \item \textbf{NIST SP 800-90B}: Entropy estimates ($H > 7.9998$ bits, $H_{\min} > 7.8$ bits) and statistical uniformity meet deterministic random bit generator (DRBG) entropy source requirements.
    
    \item \textbf{Common Criteria}: Platform-agnostic software implementation provides reproducible, verifiable randomness generation suitable for evaluated security products.
    
    \item \textbf{FIPS 140-3}: Statistical test results (uniformity, entropy, moment validation) support cryptographic module certification requiring approved random number generation.
\end{itemize}

\paragraph{Practical Security Implications}
For real-world deployment, RPSS framework offers significant security advantages over conventional approaches:
\begin{itemize}
    \item \textbf{Hardware Independence}: Eliminates hardware TRNG backdoor or manufacturing vulnerability risks
    \item \textbf{Verifiable Implementation}: Software-based approach enables third-party verification and randomness quality audit
    \item \textbf{Adaptive Security}: System naturally adapts to different computing environments while maintaining cryptographic properties
    \item \textbf{Post-Quantum Readiness}: Randomness generation quality supports post-quantum cryptographic algorithm entropy requirements
\end{itemize}

\paragraph{Final Security Assessment}
Comprehensive experimental validation demonstrates RPSS framework achieves cryptographic-grade randomness generation with inherent security properties. System resilience to environmental perturbations, consistent uniformity across diverse operating conditions, and robust entropy generation meet security-critical application requirements. Software-defined, platform-agnostic approach provides significant advantages for modern cryptographic deployments where hardware trust cannot be assumed and verifiable security is paramount. Demonstrated forward security, environmental attack resistance, and standards compliance position RPSS as viable solution for next-generation cryptographic systems.

RPSS self-stabilization capability with closed-loop entropy injection mechanism potentially offers tamper detection and self-stabilization features for TURNG, providing resilience to both environmental fluctuations and potential adversarial interference.

\section{Methods}

\subsection{Theoretical Framework}
The Random Permutation Sorting System (RPSS) generalizes the experimentally validated QPP-RNG~\cite{kuang-qpp-rng-icccas}. In RPSS, a disordered array of size $N$ is subjected to repeated random permutations until it is fully sorted, with the trial count $\hat{N}_p$ modeled by the negative binomial distribution $\mathrm{NB}(m,p)$, where $p=1/N!$. The elapsed runtime $\hat{T}$ per sorting cycle is the compound sum of independent jittered delays $X_j$, yielding conjugate observables:
\[
\hat{N}_p \quad \text{and} \quad \hat{T} = \sum_{j=1}^{\hat{N}_p} X_j.
\]
Their dependence is captured by the \emph{composition law}
\begin{equation}\label{eq:comp-methods}
\varphi_{\hat{T}}(\omega) = G_{\hat{N}_p}(\varphi_X(\omega)),
\end{equation}
where $G_{\hat{N}_p}$ is the PGF of $\hat{N}_p$ and $\varphi_X$ is the CF of the jitter distribution. This theoretical framework provides the foundation for synchronous convergence proofs and parameter selection.

\subsection{Parameterization}
To generate $n$-bit residues under modulus $R=2^n$, RPSS requires appropriate selection of array size $N$ and success count $m$. The expected number of trials is
\[
M = m N!,
\]
while the sorting cycle cost scales as
\[
C_{\mathrm{cycle}} = m N! N = \mathcal{O}(m N! N).
\]
The efficiency of randomness generation is quantified by the cost per 1-byte output,
\[
C_{\mathrm{byte}} = \frac{8}{n} C_{\mathrm{cycle}},
\]
where the factor $8/n$ accounts for the number of cycles required to form one byte. 

Empirically validated parameter sets that ensure synchronous convergence while balancing computational efficiency are provided in Table~\ref{tab:optimal-parameters} following Theorem~\ref{thm:synchronous-convergence}.

\subsection{Implementation}
The RPSS implementation follows the same design principles as QPP-RNG~\cite{qpp-rng-sci-kuang-2025}. Arrays of size $N$ are randomly permuted using a Fisher–Yates shuffle seeded by a high-resolution cycle counter. Each trial tests for array sortedness and accumulates elapsed runtime. The process repeats until $m$ successful sorts are observed. Both $\hat{N}_p$ and $\hat{T}$ residues are reduced modulo $R$ to yield $n$-bit random values. Concatenation strategies ($n=1,2,3,4$) are employed to form byte outputs.

This implementation strategy mirrors that of QPP-RNG, which has been empirically validated across diverse computing architectures---including Windows, macOS, and Raspberry Pi systems spanning x86\_64 and ARM64---demonstrating consistent high IID min-entropy between 7.85 and 7.95 bits per byte and successful passage of NIST SP 800-90B IID tests~\cite{qpp-rng-sci-kuang-2025}.

\subsection{Experimental Setup}
While QPP-RNG has been benchmarked across multiple architectures in prior work~\cite{qpp-rng-sci-kuang-2025}, in this study we focus on a Windows/x86 implementation to validate the RPSS framework. The generator is implemented in Java. For NIST SP 800-90B IID and entropy tests, we produce 1~MB of raw random data ($\sim 10^6$ bytes) per run, without post-processing. Larger datasets can be generated if needed for extended statistical validation. High-resolution timers (TSC) capture elapsed runtime, and compiler optimizations and thread affinity are controlled to minimize systematic bias while preserving natural microarchitectural noise sources such as caching, pipeline stalls, and scheduling variability.

\subsection{Statistical Validation}
Random outputs were validated against NIST SP 800-90B~\cite{nist80090b} using the reference IID test suite, estimating both min-entropy and Shannon entropy. Additional uniformity analyses included byte-frequency chi-square tests, serial correlation, and modular residue histograms. Importantly, no external randomness extractor or debiasing mechanism was applied: the observed uniformity arises intrinsically from RPSS convergence. Entropy estimates consistently exceeded $7.9998$ bits per byte across platforms, confirming the theoretical predictions of synchronous convergence.

\section{Conclusion}
\label{sec:conclusion}

This work establishes the \emph{Random Permutation Sorting System (RPSS)} as a rigorous theoretical and practical framework for True Uniform Random Number Generation (TURNG) with provable cryptographic guarantees. RPSS exploits the duality of permutation count ($\hat{N}_p$) and elapsed runtime ($\hat{T}$) to construct a self-correcting, software-based TURNG that bridges the gap between conventional PRNGs and hardware TRNGs, delivering high-quality randomness without reliance on specialized hardware.

The core theoretical advance is the compound stochastic model
\[
\hat{T} = \sum_{j=1}^{\hat{N}_p} X_j,
\]
in which the negative binomial distribution of $\hat{N}_p$ and the microarchitectural timing variability $X_j$ together generate a rich spectrum of distributional behavior. The associated composition law,
\[
\varphi_{\hat{T}}(\omega) = G_{\hat{N}_p}(\varphi_X(\omega)),
\]
fully characterizes the relationship between discrete and continuous observables, providing geometric bounds for convergence to uniformity under modular reduction. This formalism enables precise parameter selection for cryptographic deployment, ensuring both unpredictability and statistical uniformity.

The conjugate observables mechanism supplies continuous entropy injection, enabling RPSS to transform an initial PRNG or low-entropy state into a self-reinforcing TURNG. This architecture inherently adapts to device-specific imperfections and environmental fluctuations, offering strong resistance to timing attacks and adversarial manipulation. The system can thus sustain cryptographic-grade output quality even when timing channels exhibit significant variability—a critical requirement for real-world deployment.

Experimental validation confirms RPSS's exceptional performance: Shannon entropy consistently exceeding 7.9998 bits per byte, maximum uniformity deviations below 0.13\%, and robust reproducibility across heterogeneous platforms. Moreover, NIST SP 800-90B compliance is consistently achieved, demonstrating that microarchitectural noise can be reliably converted into cryptographically meaningful entropy, even on mobile and embedded devices.

RPSS represents a paradigm shift in software-defined randomness generation. By systematically transforming computational jitter into provably uniform and unpredictable outputs, RPSS establishes a new foundation for secure, hardware-agnostic cryptographic primitives. This mitigates concerns about hardware trust, supply chain integrity, and sophisticated adversarial threats by eliminating reliance on opaque entropy sources.

Future work will focus on several key directions: modeling entropy under non-i.i.d. and adversarial conditions, developing scalable and parallel implementations for high-throughput applications, optimizing parameters to meet cryptographic standards, integrating with quantum cryptographic primitives for verifiable and quantum-resistant randomness, and conducting formal analyses of resilience against side-channel and fault-injection attacks.

By uniting compound stochastic modeling with empirical validation, RPSS provides a generalizable framework for analyzing complex computational processes across cryptography, security engineering, and performance modeling. This enables the design of provably secure, software-based randomness sources that are both theoretically rigorous and practically robust. Ultimately, the Composition Law defines the mathematics of entropy purification, guaranteeing cryptographic purity from emergent computational chaos.

\bmhead{Acknowledgements}

The authors acknowledge the use of language AI tools, including ChatGPT, Gemin, and Deepseek, for language refinement and editorial assistance. Scientific analyses and conclusions remain the sole responsibility of the authors.

\section*{Declarations}

\paragraph{Funding}
This research received no specific grant from any funding agency in the public, commercial, or not-for-profit sectors.

\paragraph{Competing interests}
The authors declare no competing interests.

\paragraph{Ethics approval and consent to participate}
Not applicable. This study does not involve human participants, animal subjects, or clinical data.

\paragraph{Consent for publication}
All authors have read and approved the final manuscript for publication.

\paragraph{Data availability}
All supporting data, including permutation counts, runtime distributions, and associated supplementary figures (Supplementary Figures 1–3), are provided in Supplementary Dataset 12345 (Excel file). The file contains multiple sheets with labeled data and figure source information. Additional materials are available from the corresponding author upon reasonable request.

\paragraph{Code availability}  
The Java source code titled \texttt{QPP\_RNG.java}, which implements the Random Permutation Sorting System (RPSS) used in this study, is available as Supplementary Information with this submission. The file includes an internal README describing the three implemented tests and usage details.

\paragraph{Authors' contributions}
Y. R. K: Conceptualization, Methodology, Software, Validation, Formal analysis, Investigation, Data curation, Writing - original draft, Writing - review and editing, Visualization.


\bibliography{my}

\end{document}